\documentclass[reqno]{amsart}

\usepackage{booktabs} % For formal tables
\usepackage{IEEEtrantools}
\usepackage{amssymb,latexsym,amsfonts,amsmath}
\usepackage{graphicx}
\usepackage{mathrsfs}
\usepackage{dsfont}

\topmargin  = 0.0 in
\leftmargin = 0.9 in
\rightmargin = 1.0 in
\evensidemargin = -0.10 in
\oddsidemargin =  0.10 in
\textheight = 8.5 in
\textwidth  = 6.6 in
\setlength{\parskip}{2mm}
\setlength{\parindent}{0mm}

\usepackage{tikz}
\usetikzlibrary{calc,shapes,arrows}
\DeclareMathOperator{\diff}{d}
\usepackage{algorithm}
\usepackage{algorithmic}

\usepackage{xspace}

\newtheorem{theorem}{Theorem}[section]

\newtheorem{problem}[theorem]{Problem}

\newtheorem{corollary}[theorem]{Corollary}

\newtheorem{definition}[theorem]{Definition}

\newtheorem{remark}[theorem]{Remark}
\newtheorem{assumption}{Assumption}
\numberwithin{equation}{section}

\usepackage{multicol}
\usepackage{enumerate}
\usepackage{latexsym}
\usepackage{mathrsfs}
\usepackage{dsfont}
\usepackage{textcomp}
\usepackage{xcolor}

\begin{document}
	
\begin{abstract}
In this paper, we study formal synthesis of control policies for partially observed jump-diffusion systems against complex logic specifications. Given a state estimator, we utilize a discretization-free approach for formal synthesis of control policies by using a notation of control barrier functions without requiring any knowledge of the estimation accuracy. Our goal is to synthesize an offline control policy providing (potentially maximizing) a lower bound on the probability that the trajectories of the partially observed jump-diffusion system satisfy some complex specifications expressed by deterministic finite automata. Finally, we illustrate the effectiveness of the proposed results by synthesizing a policy for a jet engine example. 
\end{abstract}

\title[Synthesis of Partially Observed Jump-Diffusion Systems via Control Barrier Functions
]{Synthesis of Partially Observed Jump-Diffusion Systems via Control Barrier Functions
}

\author{Niloofar Jahanshahi$^{1*\dagger}$}
\author{Pushpak Jagtap$^{2*\dagger}$}
\author{Majid Zamani$^{3,1}$}
\address{$^1$Department of Computer Science, LMU Munich, Germany.}
\email{niloofar.jahanshahi@lmu.de}
\address{$^2$Department of Electrical and Computer Engineering, Technical University of Munich, Germany.}
\email{alavaei@ethz.ch}
\address{$^3$Department of Computer Science, University of Colorado Boulder, USA.}
\email{majid.zamani@colorado.edu}
%\address{$^\dagger$The authors contributed equally to this work.}
\maketitle

\section{Introduction}
	Recent years have witnessed a growing interest in formal synthesis of controllers for complex systems against complex logic specifications \cite{belta2017formal}. These specifications are usually expressed using temporal logic formulae or as (in)finite strings over finite automata. Several approaches based on finite abstraction have been widely used to solve such synthesis problems. Existing techniques include policy synthesis enforcing linear temporal logic specifications for non-stochastic systems \cite{tabuada2009verification,belta2017discrete} and for stochastic ones \cite{zamani2014symbolic,zamani2017towards,lavaei2020compositional1}. 
	%In these approaches  an abstraction that $($bi$)$simulates the original system with a formal upper bound on the approximation error is constructed by discretizing the state set. 
	When dealing with large systems, these approaches suffer severely from the curse of dimensionality (\textit{i.e.}, computational complexity grows exponentially with the dimension of the state set).  In order to overcome the large computational burden, a discretization-free approach, based on control barrier functions has shown potential to solve the formal synthesis problems (See \cite{ames2016control,jagtap2019formal,jagtap2020compositional,huang2017probabilistic} and references therein).
	The aforementioned works assume the availability of complete state information. However, in many real applications we do not have access to complete state information. 
	%\textcolor{blue}{However, to achieve infinite time horizon guarantee, a supermartingale property on the control barrier function is required  which presupposes stochastic stability and vanishing noise at the equilibrium point.}
	Motivated by this limitation, the recent result in \cite{clark2019control} provides the synthesis of controllers enforcing invariance properties for stochastic control systems with incomplete information by assuming a prior knowledge of the control barrier functions. 
	In our recent result \cite{CBF1NPM}, we consider the problem of synthesizing controllers for partially observed stochastic control systems. In particular, we search for a control barrier function that provides a controller along with a lower bound on the probability that the system satisfies invariance specifications over a finite-time horizon. Similar to \cite{clark2019control}, this work also assumes the existence of an estimator with a given probabilistic accuracy. Then we provide the overall probability threshold using the probability bound on the estimator accuracy and that of the trajectories of the estimator satisfying the invariance specifications, obtained via control barrier functions. %\textcolor{blue}{ But as opposed to \cite{clark2019control}, we do not require the supermartingale property on control barrier functions by providing results over finite-time horizon. Furthermore, our work does not require prioir knowledge of the control barrier function.}
	
	The contributions of this paper in comparison with those of \cite{clark2019control,CBF1NPM} are twofold. First, we provide an offline controller synthesis approach enforcing complex logic specifications expressed by (non)deterministic finite automata for partially observed jump-diffusion systems. As a special case, those properties include invariance ones. Second, we provide an approach for computing lower bound on the probability that the system satisfies given specifications over a finite-time horizon \textit{without} requiring any knowledge of the estimator's accuracy. Finally, we demonstrate the effectiveness of the proposed results on a nonlinear jet engine example.    
	%In this paper, we utilize the notation of control barrier functions for synthesis of partially observed jump-diffusion systems against complex specifications expressed by (non)deterministic finite automata and compute a lower bound on the probability that the system satisfies specifications over a finite-time horizon \textit{without} requiring any knowledge of the estimation accuracy. 
	%The theoretical results are illustrated with the help of a numerical example.
	\section{Preliminaries and Problem Definition}
	\textbf{Notations:} We denote the set of natural, real, and non-negative real numbers by $\mathbb{N}$, $\mathbb{R}$, and $\mathbb{R}_0^+ $, respectively. We use $\mathbb{R}^n$ to denote the $n$-dimensional Euclidean space and $\mathbb{R}^{n \times r}$ to denote the space of real matrices with $n$ rows and $r$ columns. We denote by $ e_i\in \mathbb{R}^n $ the vector whose all elements are zero, except the $ i^{th} $ element, which is one. Given a matrix $A\in\mathbb{R}^{n\times n}$, $\text{Tr}(A)$ represents trace of $A$ which is the sum of all diagonal elements of $A$.
	%We use $\lambda_{\min}(A)$ to represent the minimum eigenvalue of the symmetric matrix $A$. 
	The zero matrix in $\mathbb{R}^{n\times m}$ is denoted by $0_{n\times m}$.
	Given sets $X$ and $Y$, we donate $f:X\to Y$ an ordinary map from $X$ to $Y$ and the notation $|X|$ denotes the cardinality of set $X$. 
	\subsection{Partially Observed Jump-Diffusion Systems}\label{secB}
	Let the triplet $(\Omega, \mathcal{F}, \mathbb{P})$ denote a probability space with a sample space $ \Omega $, filtration $ \mathcal{F} $, and the probability measure $ \mathbb{P} $. The filtration $\mathbb{F}= (\mathcal{F}_s)_{s\geq 0}$ satisfies the usual conditions of right continuity and completeness \cite{oksendal2007applied}. Let $ (W_{ks})_{s\geq0} $ be $\bar{r}_k$-dimensional $ \mathbb{F} $-Brownian motions, $k=1,2$. Let $(P_{ks})_{s\geq 0}$ be a $\bar{q}_k$-dimensional $\mathbb{F}$-Poisson processes, with $k=1,2$. We assume that the Poisson processes and Brownian motions are independent of each other. The Poisson process $P_{ks}:= [P_{ks}^1; \cdots; P_{ks}^{\bar{q}_k}]$ models $\bar{q}_k$ kinds of events, $k=1,2$, whose occurrences are assumed to be independent of each other.
	We consider the partially observed jump-diffusion system $($po-JDS$)$, denoted by $\mathcal{S}$, which is described by the following stochastic differential equations (SDE)
	\begin{equation}\begin{aligned}\label{eq:S}
		\mathcal{S}:\begin{cases}
			\diff\xi\!\!\!&=f(\xi,\upsilon)\diff t+g_1(\xi)\diff W_{1t}+r_1(\xi)\diff P_{1t},\\
			\diff y\!\!\!&=h(\xi)\diff t+g_2(\xi)\diff W_{2t}+r_2(\xi)\diff P_{2t},	\end{cases}
\end{aligned}\end{equation}
	where $\xi(t)\in X\subseteq\mathbb{R}^n$ is the value of solution process $\xi$ of $\mathcal{S}$, $\upsilon(t)\in U\subseteq\mathbb{R}^m$ is the input vector, and $y(t)\in\mathbb{R}^p$ is the output vector representing the noisy partial observation at time $t\in\mathbb{R}^+_0$ $\mathbb{P}$-almost surely ($\mathbb{P}$-a.s.). Functions $f:X\times U\to\mathbb{R}^n$, $g_1:X\to\mathbb{R}^{n\times \bar{r}_1}$, $g_2:X\to\mathbb{R}^{p\times \bar{r}_2}$, $r_1:X\to\mathbb{R}^{n\times \bar{q}_1}$, $r_2:X\to\mathbb{R}^{p\times \bar{q}_2}$, and  $h:X\to\mathbb{R}^p$ are assumed to be Lipschitz continuous to ensure existence and uniqueness of the solution of $\mathcal{S}$ \cite{oksendal2007applied}. Throughout the paper, we use the notation $\xi_{a\upsilon}(t)$ to denote the value of the solution process of $\mathcal{S}$ at time $ t\in \mathbb{R}_0^+$ under the input signal $\upsilon$ starting from the initial state $\xi_{a\upsilon}(0)=a$  $\mathbb{P}$-a.s., in which $a$ is a random variable that is measurable in $\mathcal{F}_0$. Here, we assume that the Poisson processes $P_{ks}^i$ for any $i\in\{1,\ldots,\bar{q}_k\}$, $k=1,2$, have the rates of $\lambda_{ki}$.
	In order to provide the results in this paper, we raise the following assumption on the existence of the estimator that estimates the state of the po-JDS \eqref{eq:S}.
	\begin{assumption}\label{estimate}
		The states of the po-JDS $\mathcal{S}$ in \eqref{eq:S} can be estimated by a proper estimator $\hat{\mathcal{S}}$ represented in the form of an SDE as:
		\begin{equation}\label{eq:estimator1}
		\hat{\mathcal{S}}:	
		\diff\hat{\xi}=f(\hat{\xi},\upsilon)\diff t+K\big{(}\diff y-h(\hat{\xi})\diff t\big{)},	
		\end{equation}
		where $K\in\mathbb{R}^{n\times p}$ is the estimator gain.
	\end{assumption}
	There are plenty of results in the literature on the computation of estimator gain $K$ for various classes of stochastic systems; see the results in \cite{kai2009robust,clark2019control,chen2008robust}, and \cite{tseng2007robust}.
	% \cite{kou1973observability}.
	We define the augmented process $[\xi, \hat{\xi}]^T$, where $\xi$ and $\hat{\xi}$ are the solution processes of $\mathcal{S}$ and $\hat{\mathcal{S}}$, respectively. The corresponding augmented jump-diffusion system $\tilde{\mathcal{S}}$ can be defined as:
\begin{equation}\begin{aligned}
		\label{eq:augmented}
		&\hspace{-.2em}\begin{bmatrix}
			\diff \xi\\ \diff \hat{\xi}
		\end{bmatrix}\hspace{-.2em}=\hspace{-.2em}\Big{(}\begin{bmatrix}
			f(\xi,\upsilon)\\f(\hat{\xi},\upsilon)
		\end{bmatrix}+\begin{bmatrix}
			0_{n\times p} & 0_{n\times p} \\ K & -K
		\end{bmatrix}\begin{bmatrix}
			h(\xi) \\h(\hat{\xi})
		\end{bmatrix}\Big{)}\diff t
		\nonumber \\ &\hspace{-.1em}+\hspace{-.2em}\begin{bmatrix}
			g_1(\xi) & \hspace{-.6em}0_{n\times\bar{r}_2}\\
			0_{n\times \bar{r}_1 } & \hspace{-.6em}Kg_2(\xi)
		\end{bmatrix}\hspace{-.2em}\begin{bmatrix}
			\diff W_{1t}\\
			\diff W_{2t}
		\end{bmatrix}
		\hspace{-.2em}+\hspace{-.2em}\begin{bmatrix}
			r_1(\xi)\\
			0_{n\times \bar{q}_1}  
			\hspace{-.1em}\end{bmatrix}\hspace{-.1em}\diff P_{1t}
		\hspace{-.2em}+\hspace{-.2em}\begin{bmatrix}
			0_{n\times \bar{q}_2}\\
			Kr_2(\xi) 
		\end{bmatrix}\hspace{-.1em} \diff P_{2t}.
\end{aligned}\end{equation}
	For later use, we provide the definition of the infinitesimal generator (denoted by operator $\mathcal{D}$) for $\tilde{\mathcal{S}}$ using Ito's differentiation \cite{oksendal2007applied}. 
	Let $B:X\times X\rightarrow\mathbb{R}$ be a twice differentiable function. The infinitesimal generator of $B$ associated with the system $\tilde{\mathcal{S}}$ for all $(x,\hat{x})\in X\times X$ and for all $u\in U$ is given by 
	\begin{equation}\begin{aligned}
		\mathcal{D}&B(x,\hspace{-.1em}\hat{x},\hspace{-.1em}u)\hspace{-.2em}=\hspace{-.2em}\begin{bmatrix}
			\partial_x B &\hspace{-.6em}\partial_{\hat{x}}B
		\end{bmatrix} (\begin{bmatrix}
			f(x,u)\\
			f(\hat{x},u)
		\end{bmatrix}
		\hspace{-.2em}+\hspace{-.2em}\begin{bmatrix}
			0_{n\times p} & \hspace{-.6em}0_{n\times p} \\ K & \hspace{-.6em}-K
		\end{bmatrix}\hspace{-.2em}\begin{bmatrix}
			h(x) \\h(\hat{x})
		\end{bmatrix})
		\nonumber\\&+\hspace{-.2em}\frac{1}{2}\text{Tr}(\begin{bmatrix}
			g_1(x) & \hspace{-.6em}0_{n\times\bar{r}_2}\\
			0_{n\times \bar{r}_1 } &\hspace{-.6em} Kg_2(x)
		\end{bmatrix}
		\hspace{-.2em}\begin{bmatrix}
			g_1(x) & \hspace{-.6em}0_{n\times\bar{r}_2}\\
			0_{n\times \bar{r}_1 } &\hspace{-.6em} Kg_2(x)
		\end{bmatrix}^T\hspace{-.2em}
		\begin{bmatrix}
			\partial_{xx}B & \hspace{-.6em}\partial_{x\hat{x}}B\\
			\partial_{\hat{x}x}B & \hspace{-.6em}\partial_{\hat{x}\hat{x}}B
		\end{bmatrix})
		\nonumber\\&+\hspace{-.2em}\sum_{i=1}^{\bar{q}_1}\lambda_{1i}(B(x+r_1(x)\text{e}_i,\hat{x})-B(x,\hat{x}))
		\nonumber\\&+\sum_{i=1}^{\bar{q}_2}\lambda_{2i}(B(x+Kr_2(x)\text{e}_i,\hat{x})-B(x,\hat{x})).
		% \frac{\partial B}{\partial x}(x)f(x,u)+\frac{1}{2}\text{Tr}(g_1^T(x)\frac{\partial^2 B}{\partial x^2}(x)g_1(x))\nonumber\\&+\sum_{i=1}^{\bar{q}_k}\lambda_{ki}(B(x+r_1(x)\text{e}_i)-B(x)).
		\label{eq:D}
	\end{aligned}\end{equation}
	The symbols $\partial_{x}$ and $\partial_{x,\hat{x}}$ in \eqref{eq:D} represent first and second-order partial derivatives with respect to $x$ (1st argument) and $\hat{x}$ (2nd argument), respectively. Note that we dropped the arguments of $\partial_{x}B$, $\partial_{\hat x}B$, $\partial_{x,x}B$, $\partial_{x,\hat x}B$, $\partial_{\hat x, x}B$, and $\partial_{\hat x,\hat x}B$ in \eqref{eq:D} for the sake of simplicity.
	%\PJ{Change above equation for $B:X\times X\rightarrow\mathbb{R}$! and define it after introduction of augmented systems.}\\
	
	Given a po-JDS $\mathcal{S}$ in \eqref{eq:S}, we aim at synthesizing a control policy that guarantees a potentially tight lower bound on the probability that system $\mathcal{S}$ satisfies a complex specification over a finite time horizon. The class of specifications considered in this paper are provided in the next subsection. 
	\begin{remark}
		The use of the augmented system $\tilde{\mathcal{S}}$ will allow us to provide the main result of the paper without any correctness requirement on the observer. In particualr,
			our augmented system formulation provides the user the flexibility to design any observer by means of any technique. The probabilistic distance
			between the values of state and their estimator is natively considered in
			our formulation and one does not need to quantify this distance a-priori
			which is needed in the results proposed in \cite{CBF1NPM,clark2019control}.
	\end{remark}
	\subsection{Specifications}
	In this subsection, we consider the class of specifications expressed by nondeterministic finite automata $($NFA$)$ as defined below.
	\begin{definition}\cite{baier2008principles}\label{NFA} 
		A nondeterministic finite automaton $($NFA$)$ is a tuple $\mathcal{A}=(Q,Q_0,\Sigma,\delta,F)$, where $Q$ is a finite set of states, $Q_0\subseteq Q$ is a set of initial states, $\Sigma$ is a finite set $($a.k.a. alphabet$)$, $\delta: Q\times\Sigma\rightarrow P(Q)$ is a transition function, where $P(Q)$ denotes the power set of $Q$, and $F\subseteq Q$ is a set of accepting (or final) states.
	\end{definition}
	
	NFA $\mathcal{A}$ is called \textit{deterministic} if the transition function is defined as $\delta:Q\times \Sigma\to Q$, and we refer to it as deterministic finite automata (DFA). Since every NFA can be converted to its equivalent DFA using the powerset construction \cite{bonchi2013checking}, in the rest of the paper, we only deal with DFA. Moreover, it is well known that the complement of a DFA $\mathcal{A}$, denoted by $\mathcal{A}^c$, is again a DFA \cite{hopcroft2001introduction}.
	We use the notation $q\overset{\sigma}{\longrightarrow} q'$ to denote transition relation $(q,\sigma,q')\in\delta$. A finite word $\sigma=(\sigma_0,\sigma_1,\ldots,\sigma_{k-1})\in \Sigma^k$ is accepted by DFA $\mathcal{A}$ if there exists a finite state run $\mathbf{q}=(q_0,q_1,\ldots,q_{k})\in Q^{k+1}$ such that $q_0\in Q_0$, $q_i \overset{\sigma_i}{\longrightarrow} q_{i+1}$ for all $0\leq i< k$ and $q_{k}\in F$. The accepted language of $\mathcal{A}$, denoted by $\mathcal{L}(\mathcal{A})$, is the set of all words accepted by $\mathcal{A}$.
	
	%We also denote the set of all successor states of a state $q\in Q$ by $\Delta(q)$
	In this work, we consider those specifications given by the accepting languages of DFA $\mathcal{A}$ defined over a set of atomic propositions $\Pi$, i.e., the alphabet $\Sigma=\Pi$.
	We should highlight that all linear temporal logic specifications defined over finite traces, referred to as LTL$_F$, are recognized by DFA \cite{de2015synthesis}.   
	
	\subsection{Satisfaction of Specification by po-JDS}
	A given po-JDS $\mathcal{S}$ in \eqref{eq:S} is connected to the specification given by the accepting language of a DFA $\mathcal{A}$ defined over a set of atomic propositions $\Pi$, with the help of a measurable labeling function $L: X \rightarrow \Pi$ as described in the next definition which is similar to  \cite[Definition 2]{wongpiromsarn2015automata}.
	%For a given partial information stochastic control system  $S=(\mathbb{R}^n,\mathbb{R}^m,\mathcal{U},f,g,r,\mathbb{R}^p,h,\S)$ with dynamics \eqref{eq:S}, the solution processes over finite time intervals are connected to LTL$_{F\backslash \medcirc}$ formulas with the help of a measurable labeling function $L: \mathbb{R}^n \rightarrow \Pi$, where $\Pi$ is the set of atomic propositions. 
	\begin{definition}\label{Def_Finite_trace}
		For a po-JDS $\mathcal{S}$ as in \eqref{eq:S} and the labeling function $L:X \rightarrow \Pi$, a finite sequence $\sigma(\xi_{a \upsilon})=(\sigma_0,\sigma_1,\ldots,\sigma_{k-1}) \in \Pi^{k}$, $k\in\mathbb{N}$, is a finite trace of the solution process $\xi_{a\upsilon}$ over a finite time horizon $[0,T)\subset\mathbb{R}_0^+$ if there exists an associated time sequence $t_0, t_1,\ldots,t_{k-1}$ such that $t_0=0$, $t_k=T$, and for all $j \in \{0,1,\ldots,k-1\}$, $t_j \in \mathbb{R}_{0}^+$ following conditions hold
		\begin{itemize}
			\item $t_j<t_{j+1}$;
			\item $\xi_{a\upsilon}(t_j)\in L^{-1}({\sigma_j})$;
			\item If $\sigma_j \neq \sigma_{j+1}$, then for some $t_j' \in [t_j,t_{j+1}]$, $\xi_{a\upsilon}(t)\in L^{-1}({\sigma_j})$ for all $t \in (t_j,t_j')$; $\xi_{a\upsilon}(t)\in L^{-1}({\sigma_{j+1}})$ for all $t \in (t_j',t_{j+1})$; and either $\xi_{a\upsilon}(t_j')\in L^{-1}(\sigma_j)$ or $\xi_{a\upsilon}(t_j')\in L^{-1}(\sigma_{j+1})$.
		\end{itemize}
	\end{definition}
	Next, we define the probability that the solution process $\xi_{a\upsilon}$ of the po-JDS $\mathcal{S}$ starting from some initial state $\xi_{a\upsilon}(0)=a\in X_0$ under control policy $\upsilon$ satisfies
	the specification given by DFA $\mathcal{A}$.
	\begin{definition}
			The finite trace corresponding to the solution process of a po-JDS $\mathcal{S}$ starting from $a\in X$ and under the control policy $\upsilon$ over a finite-time horizon $[0,T)\subset \mathbb{R}^+_0$, i.e. $\sigma(\xi_{a\upsilon})=(\sigma_0,\sigma_1,\ldots,\sigma_j,\ldots,\sigma_{k-1})\in \Pi^k$ as in Definition \ref{Def_Finite_trace}, satisfies a specification given by the language of a DFA $\mathcal{A}$, denoted by $\sigma(\xi_{a\upsilon})\models \mathcal{A} $, if there exists $j\in \{0,\ldots,k-1\}$ such that $(\sigma_0,\sigma_1,\ldots,\sigma_j)\in \mathcal{L}(\mathcal{A})$. The probability of satisfaction of the specification given by $\mathcal A$ is denoted by $\mathbb{P}\{\sigma(\xi_{a\upsilon})\models\mathcal{A}\}$.
	\end{definition}
	%\begin{definition}
	%Consider a po-JDS $\mathcal{S}$ and a specification given by the accepting language of DFA $\mathcal{A}=(Q,Q_0,\Pi,\delta,F)$ over a set of atomic propositions $\Pi=\{p_0,p_1,\ldots,p_M\}$, and consider $\sigma(\xi_{a\upsilon})$ $($i.e., a finite trace of trajectory $\xi_{a \upsilon})$ as in Definition \ref{Def_Finite_trace}. Then $\mathbb{P}\{\sigma(\xi_{a\upsilon})\models\mathcal{A}\}$ is the probability that the specification given by $\mathcal{A}$ is satisfied by the solution process $\xi_{a\upsilon}$ of the system $\mathcal{S}$ starting from the initial state $a\in X_0$ under the control policy $\upsilon$ over a finite-time horizon $[0,T)\subset\mathbb{R}_0^+$.
	%\end{definition}
	
	\begin{remark}
		The set of atomic propositions $\Pi=\{p_0,p_1,\ldots,p_M\}$ and the labeling function $L: X \rightarrow \Pi$ provide a measurable partition of the state set $X = \cup_{i=1}^N X_i$ as  $X_i:=L^{-1}(p_i)$. Without loss of generality, we assume that $X_i\neq \emptyset$ for any $i$.
	\end{remark}
	\subsection{Problem Definition}
	Now, we formally define the main synthesis problem considered in this work. 
	\begin{problem}\label{problem}
		% Given a partially observed stochastic control system $\mathcal{S}$ in \eqref{eq:S}, given an estimator \eqref{eq:estimator1}, and the resulting augmented system \eqref{eq:augmented}, a specification given by the accepting language of DFA $\mathcal{A}=(Q,Q_0,\Pi,\delta,F)$ over a set of atomic propositions $\Pi=\{p_0,p_1,\ldots,p_M\}$, and a labeling function $L: \mathbb{R}^{n\times n} \rightarrow \Pi$, and a real value $\vartheta\in(0,1)$, compute a control policy $\upsilon $($if existing$)$ such that $\mathbb{P}^{\tilde{a}}_\unsilon\{L(\tilde{\xi}_{\tilde{a},\sigma}\models\mathcal{A}
		% \}\geq \vartheta$, for all $\tilde{a}\in L^{-1}(p_i)$ and some $i\in\{0,1,\cdots,M\}$.
		Given a po-JDS $\mathcal{S}$ as in \eqref{eq:S}, a specification given by the accepting language of DFA $\mathcal{A}=(Q,Q_0,\Pi,\delta,F)$ over a set of atomic propositions $\Pi=\{p_0,p_1,\ldots,p_M\}$, a labeling function $L: X \rightarrow \Pi$, and a real value $\vartheta\in(0,1)$, compute an offline control policy $\upsilon$ $($if existing$)$ such that $\mathbb{P}\{\sigma(\xi_{a\upsilon})\models\mathcal{A}\}\geq \vartheta$, for all $a\in L^{-1}(p_i)$ and some $i\in\{0,1,\ldots,M\}$.
	\end{problem}
	
	Finding a solution to Problem \ref{problem} (if existing) is difficult in general. We should highlight that the proposed approach here is sound in solving the considered synthesis problem. This means that if the proposed method provides a solution to a synthesis problem, then we can formally conclude that the proposed controller renders the given specification with the corresponding lower bound on the probability of satisfaction. However, if the method fails to provide any solution, then there may or may not exist a solution to the original synthesis problem). Our approach is to compute a policy $\upsilon$ together with a lower bound \underline{$\vartheta$}. Our aim is to find the potentially largest lower bound, which can be compared with $\vartheta$ and gives policy, i.e., a solution for Problem \ref{problem} if \underline{$\vartheta$}$\geq\vartheta$. Instead of computing a control policy that guarantees the lower bound
	\underline{$\vartheta$}, we compute a policy that guarantees $\mathbb{P}\{\sigma(\xi_{a\upsilon})\models\mathcal{A}^c\}
	\leq\bar{\vartheta}$, for any $a\in L^{-1}(p_i)$ and some $i\in\{0,1,\ldots,M\}$. Then for the same control policy the lower bound can be easily obtained as \underline{$\vartheta$}$=1-\bar{\vartheta}$. This is done by constructing a DFA $\mathcal{A}^c$ whose language is the complement of the language of DFA $\mathcal{A}$. To synthesize a controller, we utilize the notion of control barrier functions defined for augmented jump-diffusion system $\tilde{\mathcal{S}}$ introduced in the next section. 
	\section{Control Barrier Functions}
	In this section, we provide sufficient conditions using so-called control barrier functions under which we can provide the  upper  bound  on  the  probability  that  the  trajectories of system $\mathcal{S}$ starting from any initial state in $X_0\subseteq X$  reach $X_1\subseteq X$.
	To provide a result giving an upper bound on the reachability probability for the trajectory of $\mathcal{S}$, we provide conditions on barrier functions constructed over the augmented system $\tilde{\mathcal{S}}$.
	\begin{theorem} \label{barrier1}
		Consider a po-JDS $\mathcal{S}$ as in \eqref{eq:S}, its estimator $\hat{\mathcal{S}}$ as in \eqref{eq:estimator1}, the resulting augmented system $\tilde{\mathcal{S}}$ as in \eqref{eq:augmented} and sets $X_0, X_1\subseteq X $. Suppose there exists a twice differentiable function $B: X\times X\rightarrow \mathbb R_0^{+}$, constants $c\geq 0$ and $\gamma \in [0,1)$ such that
		\begin{equation} 
			\forall (x,\hat{x}) \in X_0\times X_0,\quad \  \quad \quad B(x,\hat{x}) \leq \gamma, \label{eq:th1}
			\end{equation}
		\begin{equation}
			\forall (x,\hat{x}) \in X_1 \times X, \ \ \quad \quad \quad B(x,\hat{x}) \geq 1,  \label{eq:th2}
		\end{equation}
	\begin{equation}
			\forall \hat{x}\in X, \exists u\in U, \forall x\in X, \quad \mathcal{D}B(x,\hat{x},u)\leq c.
			\label{eq:th3}
		\end{equation}
		Then the probability that the solution process $\xi_{a\upsilon}$ of the system $\mathcal{S}$ starts from any initial state $a\in X_0$ and reaches region $X_1$ under the control policy $\upsilon$ within time horizon $[0,T)\subset\mathbb{R}_0^+$ is upper bounded by $\gamma+c T$.
	\end{theorem}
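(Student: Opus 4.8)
The plan is to reduce the reachability statement to a supermartingale argument built on the control barrier function $B$, following the standard template for stochastic barrier certificates adapted to the jump-diffusion setting.

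The plan is to reduce the reachability bound to a supermartingale argument driven by the generator condition \eqref{eq:th3}. First I would fix the control policy. Since \eqref{eq:th3} has the quantifier structure $\forall\,\hat{x}\,\exists\,u\,\forall\,x$, for each estimate value $\hat{x}\in X$ there is a witnessing input, which defines a measurable map $\mathfrak{u}:X\to U$ with $\mathcal{D}B(x,\hat x,\mathfrak u(\hat x))\leq c$ for all $x\in X$; the offline policy is then $\upsilon(t)=\mathfrak u(\hat\xi(t))$, depending only on the observable estimate $\hat\xi$, as required in the partially observed setting. Under this policy the augmented generator \eqref{eq:D} satisfies $\mathcal{D}B(\xi(t),\hat\xi(t),\upsilon(t))\leq c$ along every trajectory of $\tilde{\mathcal S}$.

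Next I would apply Dynkin's formula for jump-diffusions to the process $Y(t):=B(\xi(t),\hat\xi(t))$ evaluated along $\tilde{\mathcal S}$ in \eqref{eq:augmented}. Because the generator \eqref{eq:D} already incorporates the diffusion (trace) term and the two Poisson compensator sums, It\^o's rule yields
\begin{equation*}
Y(t)=Y(0)+\int_0^t \mathcal{D}B(\xi(s),\hat\xi(s),\upsilon(s))\,\diff s + M(t),
\end{equation*}
where $M(t)$ collects the Brownian stochastic integrals and the compensated Poisson integrals, all of which are (local) martingales with $M(0)=0$. Combined with the pointwise bound $\mathcal{D}B\leq c$, this shows that $Y(t)-ct$ is a supermartingale, and taking expectations gives $\mathbb{E}[Y(t)]\leq \mathbb{E}[Y(0)]+ct$.

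To convert this drift estimate into a reachability probability, I would introduce the first hitting time $\tau:=\inf\{t\geq 0: \xi(t)\in X_1\}$ and apply the optional stopping theorem at $\tau\wedge T$, obtaining $\mathbb{E}[Y(\tau\wedge T)] \leq \mathbb{E}[Y(0)] + c\,\mathbb{E}[\tau\wedge T] \leq \gamma + cT$, where the initial bound uses \eqref{eq:th1} (with the estimator initialized so that $\hat\xi(0)\in X_0$, so that $Y(0)\leq\gamma$). On the event $\{\tau<T\}$ the true state satisfies $\xi(\tau)\in X_1$, so \eqref{eq:th2} forces $Y(\tau)\geq 1$; since $B\geq 0$ everywhere by hypothesis, discarding the complementary event gives
\begin{equation*}
\mathbb{P}\{\tau<T\} \leq \mathbb{E}[Y(\tau\wedge T)] \leq \gamma+cT.
\end{equation*}
As $\{\tau<T\}$ is precisely the event that $\xi_{a\upsilon}$ reaches $X_1$ within $[0,T)$, this is the claimed upper bound.

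The main obstacle I anticipate is the rigorous justification of the two martingale steps for the jump-diffusion $\tilde{\mathcal S}$: verifying that the stochastic integrals against $W_{1t},W_{2t}$ and the compensated versions of $P_{1t},P_{2t}$ are genuine martingales (rather than merely local martingales) and that optional stopping is legitimate at the possibly unbounded $\tau\wedge T$. The standard remedy is to localize by a sequence of stopping times confining the process to compact sets, apply the bounded version of each inequality there, and then pass to the limit using the nonnegativity of $B$ together with monotone and dominated convergence; the Lipschitz hypotheses on $f,g_1,g_2,r_1,r_2,h$ and the twice-differentiability of $B$ supply the integrability needed for this limiting argument.
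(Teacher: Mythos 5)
Your proof is correct and rests on the same core mechanism as the paper's: the generator bound \eqref{eq:th3} makes $B(\xi(t),\hat\xi(t))-ct$ a supermartingale along the augmented process, and the level-set conditions \eqref{eq:th1}--\eqref{eq:th2} convert that into a reachability bound. The differences are organizational. The paper does not unpack the martingale argument at all: it bounds $\mathbb{P}\{\sup_{0\le t\le T}B(\xi_{a\upsilon}(t),\hat\xi_{\hat a\upsilon}(t))\ge 1\}$ by $B(a,\hat a)+cT$ by citing Kushner's maximal inequality for such processes as a black box, and then passes from the augmented reach event $\{\xi\in X_1\wedge\hat\xi\in X\}$ to the marginal event $\{\xi\in X_1\}$ by an inclusion--exclusion step exploiting the fact that $\hat\xi(t)\in X$ holds with probability one. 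You instead prove the probabilistic estimate from first principles (Dynkin's formula plus optional stopping at the hitting time $\tau$ of $X_1$ by the $\xi$-component), which yields the marginal bound directly and makes the paper's inclusion--exclusion detour unnecessary: since \eqref{eq:th2} holds on all of $X_1\times X$, you get $B(\xi(\tau),\hat\xi(\tau))\ge 1$ on $\{\tau<T\}$ regardless of where the estimator sits. The price is that you must justify the localization and optional-stopping technicalities yourself (which you correctly flag and for which the standard remedy you describe suffices), whereas the paper inherits them from the cited result; the gain is a self-contained argument and a cleaner final step. Both proofs share the same implicit assumptions---that the estimator is initialized in $X_0$ so that \eqref{eq:th1} applies, and that both components remain in $X$ so that \eqref{eq:th2} is applicable at the hitting time---which you make explicit and the paper does not.
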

	\begin{proof}
		By using \eqref{eq:th1} and the fact that $ X_1\times X\subseteq\big{\{}(x,\hat{x})\in X\times X \mid B(x,\hat{x})\geq 1\big{\}}$, we have
		$\mathbb{P}\big{\{}\xi_{a\upsilon}(t)\in X_1\wedge\hat{\xi}_{\hat{a}\upsilon}(t)\in X ~\exists t\in[0,T) \mid a, \hat{a}\big{\}}\leq
		\mathbb{P}\big{\{}\text{sup}_{0\leq t\leq T}B(\xi_{a\upsilon}(t),\hat{\xi}_{\hat{a}\upsilon}(t)) \geq 1 \mid a,\hat{a}\big{\}}\leq
		B(a,\hat{a})+cT\leq \gamma +cT$. The second inequality is obtained by utilizing the result of \cite[Theorem 1]{kushner1967stochastic}. This implies that the probability of the augmented trajectory of $\tilde{\mathcal{S}}$ staring from any $(a,\hat a)\in X_0\times X_0$ and reaching $X_1\times X$ is upper bounded by $\gamma +cT$.\\
		Now we get $\mathbb{P}\big{\{}\xi_{a\upsilon}(t)\in X_1\wedge\hat{\xi}_{\hat{a}\upsilon}(t)\in X ~\exists t\in[0,T) \mid a, \hat{a}\big{\}}\leq\mathbb{P}\big{\{}\xi_{a\upsilon}(t)\in X_1 ~\exists t\in[0,T) \mid a\big{\}}+
		\mathbb{P}\big{\{}\hat{\xi}_{\hat{a}\upsilon}(t)\in X ~\exists t\in[0,T) \mid \hat a\big{\}}-
		\mathbb{P}\big{\{}\xi_{a\upsilon}(t)\in X_1 \vee \hat{\xi}_{\hat{a}\upsilon}(t)\in  X ~\exists t\in[0,T) \mid a, \hat{a}\big{\}}$.
		Since, the second and last terms trivially hold with probability 1, one has
		$\mathbb{P}\big{\{}\xi_{a\upsilon}(t)\in X_1\wedge\hat{\xi}_{\hat{a}\upsilon}(t)\in X ~\exists t\in[0,T) \mid a, \hat{a}\big{\}}\leq\mathbb{P}\big{\{}\xi_{a\upsilon}(t)\in X_1 ~\exists t\in[0,T) \mid a\big{\}}$.
		Now, since the right term of the \texttt{and} (i.e. $\wedge$) is held for all time, the inequality above becomes an equality and one gets
		$\mathbb{P}\big{\{}\xi_{a\upsilon}(t)\in X_1 ~\exists t\in[0,T) \mid a\big{\}} \leq \gamma+Tc$
		which concludes the proof. 
	\end{proof} 
	The function $B$ in Theorem \ref{barrier1} satisfying \eqref{eq:th1}-\eqref{eq:th3} is usually referred to as the control barrier function.
	\begin{remark}\label{reamrk1}
		Condition \eqref{eq:th3} implicitly associates a stationary controller $\mathsf{u} : X \rightarrow U$ according to the existential quantifier on $u$ for any $\hat x \in X$ and is independent of choice of $x\in X$. The stationary control policy $\upsilon$ driving the system is readily given by $\upsilon(t) = \mathsf{u}(\hat\xi_{a\upsilon}(t))$,
		where $\hat\xi_{a\upsilon}$ is the solution process of the estimator.
	\end{remark}
	\section{Formal Synthesis of Controllers}
	To synthesize control policies using control barrier functions enforcing specifications expressed by DFA $\mathcal{A}$, we first provide the decomposition of specifications into sequential reachability tasks which will later be solved using control barrier functions.
	\subsection{Decomposition into Sequential Reachability}
	\label{subsection:runs}
	Consider a DFA $\mathcal{A}$ expressing the properties of interest for the system $\mathcal{S}$. Consider DFA $\mathcal{A}^c=(Q,Q_0,\Pi,\delta,F)$ whose language is the complement of the language of DFA $\mathcal{A}$.
	%Let $Q_s \subseteq Q$ be a set of states in  $A_{\neg\varphi}$ having self loops, i.e, $Q_s:= \{q \in Q \mid \exists p \in \Pi, q \xrightarrow{p} q \}$
	The sequence $\mathbf{q}=(q_0,q_1,\ldots,q_k)\in Q^{k+1}$, $k\in\mathbb{N}$ is called an accepting state run if $q_0\in Q_0$, $q_k\in F$, and there exists a finite word $\sigma = (\sigma_0,\sigma_1,\ldots,\sigma_{k-1})\in\Pi^k$ such that $q_i \overset{\sigma_i}{\longrightarrow} q_{i+1}$ for all $i\in\{0,1,\ldots, k-1\}$. We denote the finite word corresponding to accepting state run $\mathbf{q}$ by $\sigma(\mathbf{q})$.
	We also indicate the length of $\mathbf{q}\in Q^{k+1}$ by $|\mathbf{q}|$, which is $k+1$. Let $\mathcal{R}$ be the set of all finite accepting state runs starting from $q_0\in Q_0$ excluding self-loops, where
	$$\mathcal{R}\hspace{-.2em} := \hspace{-.2em} \{\mathbf{q}\hspace{-.2em}  =\hspace{-.2em}  (q_0,q_1,\ldots,q_k)\hspace{-.2em} \in \hspace{-.2em} Q^{k+1} \mid q_k\hspace{-.2em} \in \hspace{-.2em} F, q_i\hspace{-.2em} \neq\hspace{-.2em}  q_{i+1},\forall i\hspace{-.2em} <\hspace{-.2em} k\}.$$
	Computation of $\mathcal{R}$ can be done algorithmically by viewing $\mathcal{A}^c$ as a directed graph $\mathcal{G}=(\mathcal{V},\mathcal{E})$ with vertices $\mathcal{V}=Q$ and edges $\mathcal{E}\subseteq\mathcal{V}\times\mathcal{V}$ such that $(q,q')\in\mathcal{E}$ if and only if $q'\neq q$ and there exist $p\in\Pi$ such that $q\overset{p}{\longrightarrow} q'$. For any $(q,q')\in \mathcal{E}$, we donate the atomic proposition associated with the edge $(q,q')$ by $\sigma(q,q')$.
	From the construction of the graph, it is obvious that the finite path in the graph starting from vertices $q_0\in Q_0$ and ending at $q_F\in F$ is an accepting state run $\mathbf{q}$ of $\mathcal{A}^c$ without any self-loop and therefore belongs to $\mathcal{R}$. One can easily compute $\mathcal{R}$ using depth first search algorithm \cite{russell2003artificial}.
	For each $p \in \Pi$, we define a set $\mathcal{R}^p$ as
	\begin{equation}
	\label{eq:runs}
	\mathcal{R}^p := \{\mathbf{q} = (q_0,q_1,\ldots,q_k)\in\mathcal R \mid \sigma(q_0,q_1)=p\}.
	\end{equation}
	Decomposition into sequential reachability is performed as follows. For any $\mathbf{q}=(q_0,q_1,\ldots,q_k) \in \mathcal{R}^p \ \forall p \in \Pi$, we define $\mathcal{P}^p(\mathbf{q})$ as a set of all state runs of length $3$,
	\begin{equation} \label{eq:reachability}
	\mathcal{P}^p(\mathbf{q}):=\{(q_i,q_{i+1},q_{i+2}) \mid 0 \leq i \leq k-2\}.
	\end{equation}
	Now, we define $\mathcal{P}(\mathcal{A}^c):=\bigcup_{p\in\Pi}\bigcup_{\mathbf{q}\in\mathcal{R}^p}\mathcal{P}^p(\mathbf{q})$.
	\begin{remark}
		Note that $\mathcal{P}^p(\mathbf{q})=\emptyset$ for $|\mathbf{q}|=2$. In fact, any accepting state run of length $2$ specifies a subset of the state set such that the system satisfies $\mathcal{A}^c$ whenever it starts from that subset. This gives trivial zero probability for satisfying the specification, thus neglected in the sequel.
	\end{remark}
	
	For the illustration of the above sets, we kindly refer the interested reader to Example 1 in \cite{jagtap2019formal}.
	%\begin{example}
	%\label{example11}
	%Consider a DFA $\mathcal{A}^c$ given in Figure \ref{fig:dfapic} with $Q_0=\{q_0\}$ and $F=\{q_3\}$. The set of accepting state runs without self-loops is
	%\begin{equation*}
	%\mathcal{R}=\{(q_0,q_4,q_3),(q_0,q_1,q_2,q_3),(q_0,q_1,q_4,q_3),(q_0,q_3)\}.
	%\end{equation*}
	%The sets of $\mathcal{R}^p$ for $p \in \Pi$ are
	%\begin{align*}
	%&\mathcal{R}^{p_0}=\{(q_0,q_1,q_2,q_3),(q_0,q_1,q_4,q_3)\}, \ 
	%\mathcal{R}^{p_1}=\{(q_0,q_3)\}, \\
	%&\mathcal{R}^{p_2}=\{(q_0,q_4,q_3)\}, \ 
	%\mathcal{R}^{p_3}=\{(q_0,q_3)\}. 
	%\end{align*}
	%The sets $\mathcal{P}^p(\mathbf{q})$ for $\mathbf{q}\in\mathcal{R}^p$ are as follows:
	%\begin{align*}
	%&\mathcal{P}^{p_0}(q_0,q_1,q_2,q_3)=\{(q_0,q_1,q_2),(q_1,q_2,q_3)\},\\
	%&\mathcal{P}^{p_0}(q_0,q_1,q_4,q_3)=\{(q_0,q_1,q_4),(q_1,q_4,q_3)\}, \\
	%&\mathcal{P}^{p_2}(q_0,q_4,q_3)\hspace{-0.2em}=\hspace{-0.2em}\{(q_0,q_4,q_3)\},\mathcal{P}^{p_1}(q_0,q_3)\hspace{-0.2em}=\hspace{-0.2em}\mathcal{P}^{p_3}(q_0,q_3)\hspace{-0.2em}=\hspace{-0.2em}\emptyset.
	%\end{align*} 
	%\end{example}
	%\begin{figure}[t!]
	%\centering
	%\subfigure[]{\includegraphics[scale=0.5, height = 4cm]{example1.eps}}
	%\caption{
	%\b{(a) State space and regions of interest for Example~2.11,(b)
	%An Example of a DFA $\mathcal{A}^c$.}\label{fig:dfapic}
	%\end{figure}
	Having $\mathcal{P}^p(\mathbf{q})$ in \eqref{eq:reachability} as the set of state runs of length $3$, in this subsection, we provide a systematic approach to compute a policy together with a (potentially tight) lower bound on the probability that the solution process of $\mathcal{S}$ satisfies the specifications given by DFA $\mathcal{A}$. Given a DFA $\mathcal{A}^c$, our approach relies on performing a reachability computation over each element of $\mathcal{P}(\mathcal{A}^c)$  (i.e., $\bigcup_{p\in\Pi}\bigcup_{\mathbf{q}\in\mathcal{R}^p}\mathcal{P}^p(\mathbf{q})$), where reachability probability is upper bounded using control barrier functions along with appropriate choices of control inputs as mentioned in Theorem~\ref{barrier1}. However, computation of control barrier functions and the policies for each element $\nu\in\mathcal P(\mathcal{A}^c)$, can cause ambiguity while utilizing controllers in closed-loop whenever there are more than one outgoing edges from a state of the automaton. 
	%To make it more clear, consider elements $\nu_1=(q_0,q_1,q_2)$ and $\nu_2=(q_0,q_1,q_4)$ from Example \ref{example11}, where there are two outgoing transitions from state $q_1$ (see Figure \ref{fig:dfapic}).
	%This results in two different reachability problems, namely, reaching sets $L^{-1}(\sigma(q_1,q_2))$ and $L^{-1}(\sigma(q_1,q_4))$ starting from the same set $L^{-1}(\sigma(q_0,q_1))$. Thus computing different control barrier functions and corresponding controllers in such a scenario is not helpful. 
	To resolve this ambiguity, we simply merge such reachability problems into one reachability problem by replacing the reachable set $X_1\times X$ in Theorem~\ref{barrier1} with the union of regions corresponding to the alphabets of all outgoing edges. Thus we get a common control barrier function and a corresponding controller. This enables us to partition $\mathcal P(\mathcal{A}^c)$ and put the elements sharing a common control barrier function and a corresponding controller in the same partition set. These sets can be formally defined as
	\begin{equation*}\begin{aligned}
		\mu_{(q,q',\Delta(q'))}:=\{(q,q',&q'')\in\mathcal P(\mathcal{A}^c)\\&\mid q,q',q''\in Q\text{ and }q''\in\Delta(q')\}.
	\end{aligned}\end{equation*}
	The control barrier function and the controller (as discussed in Remark \ref{reamrk1}) corresponding to the partition set $\mu_{(q,q',\Delta(q'))}$ are denoted by $B_{\mu_{(q,q',\Delta(q'))}}(x,\hat{x})$ and $\mathsf{u}_{\mu_{(q,q',\Delta(q'))}}(\hat{x})$, respectively. Thus, for all $\nu\in\mathcal P(\mathcal{A}^c)$, we have 
	\begin{equation}\begin{aligned}\label{eq:controller}
		B_\nu(x,\hat{x})=B_{\mu_{(q,q',\Delta(q'))}}(x,\hat{x})\text{ and } \mathsf{u}_\nu(\hat{x})=\mathsf{u}_{\mu_{(q,q',\Delta(q'))}}(\hat{x}),\nonumber\\
		\text{if } \nu\in\mu_{(q,q',\Delta(q'))}.
	\end{aligned}\end{equation}
	\subsection{Control Policy}\label{aaaaa1}
	From the above discussion, one can readily observe that we have different control policies at different locations of the automaton which can be interpreted as a switching control policy.
	%\SS{I suggest changing all subscript $M$ to something smaller. This reduces the amount of space needed and make it nicer. For example instead of $\mathcal A_M$ we can use $\mathcal A_{\mathfrak m}$. Please change the following equations if you agree.}
	Next, we define the automaton representing the switching mechanism for control policies. Consider the DFA $\mathcal{A}^c=(Q,Q_0,\Pi,\delta,F)$ corresponding to the complement of DFA $\mathcal{A}$
	as discussed in Section~\ref{subsection:runs}, where $\Delta(q)$ denotes the set of all successor states of $q\in Q$. Now, the switching mechanism is given by a DFA $\mathcal{A}_{\mathfrak m}=(Q_{\mathfrak m},Q_{\mathfrak m 0},\Pi_{\mathfrak m},\delta_{\mathfrak m},F_{\mathfrak m})$, where $Q_{\mathfrak m}:=Q_{\mathfrak m 0}\cup\{(q,q',\Delta(q'))\mid q,q'\in Q\setminus F\}\cup F_{\mathfrak m}$ is the set of states, $Q_{\mathfrak m 0}:=\{(q_0,\Delta(q_0))\mid q_0\in Q_0\}$ is the set of initial states, $\Pi_{\mathfrak m}=\Pi$, $F_{\mathfrak m}=F$, and the transition relation $(q_{\mathfrak m},\sigma,q_{\mathfrak m}')\in \delta_{\mathfrak m}$ is defined as
	\begin{itemize}
		\item for all $q_{\mathfrak m}=(q_0,\Delta(q_0))\in Q_{\mathfrak m 0}$,\\ $(q_0,\hspace{-.1em}\Delta(q_0))\hspace{-.2em}\overset{\sigma(q_0,q'')}{\longrightarrow}\hspace{-.2em}(q_0,\hspace{-.1em}q'',\hspace{-.1em}\Delta(q''))$, where $q_0\hspace{-.2em}\overset{\sigma(q_0,q'')}{\longrightarrow}\hspace{-.2em}q''$;
		\item for all $q_{\mathfrak m}=(q,q',\Delta(q'))\in Q_{\mathfrak m}\setminus (Q_{\mathfrak m 0}\cup F_{\mathfrak m})$,
		\begin{itemize}
			\item $(q,q',\Delta(q'))\overset{\sigma(q',q'')} {\longrightarrow}(q',q'',\Delta(q''))$, such that $q,q',q''\in Q$, $q'\!\!\overset{\sigma(q',q'')}{\longrightarrow}q''$, and $q''\notin F$; and
			\item $(q,q',\Delta(q'))\overset{\sigma(q',q'')} {\longrightarrow} q''$, such that $q,q',q''\in Q$, $q'\!\!\overset{\sigma(q',q'')}{\longrightarrow}q''$, and $q''\in F$.
		\end{itemize}
	\end{itemize}
	The hybrid controller defined over augmented state-space $X\times Q_\mathsf{m}$ that is a candidate for solving Problem~\ref{problem} is given by
	\begin{equation}\label{eq:policy}
	\tilde{\mathsf u}(\hat{x},q_{\mathfrak m})=\mathsf{u}_{\mu_{(q_{\mathfrak m}')}}(\hat{x}), \quad \forall (q_{\mathfrak m},L(\hat x),q_{\mathfrak m}')\in\delta_{\mathfrak m}.
	\end{equation}
	The corresponding hybrid control policy $\upsilon$ is given by $\upsilon(t) = \tilde{\mathsf u}(\hat\xi(t), q_\mathsf{m})$.
	For the illustration of the switching mechanism, see Example 1 in \cite[Section 5]{jagtap2019formal}. In the next subsection, we discuss the computation of bound on the probability of satisfying the specification under such a policy, which then can be used for checking if this policy is indeed a solution for Problem~\ref{problem}.
	%\begin{example}\label{example43}
	%(continued) The DFA $\mathcal{A}_m=(Q_m,Q_{m0}, \Pi_m, \delta_m, F_m)$ modeling the switching mechanism between policies for the system in Example \ref{example11} is shown Figure~\ref{switching}. 
	%	\begin{figure}[h] 
	%		\centering
	%		\includegraphics[scale=0.6]{switching} 
	%		\caption{DFA $\mathcal{A}_{\mathfrak m}$ representing switching mechanism for controllers for Example 1.}
	%		\label{switching}
	%	\end{figure}
	%\end{example}
	\subsection{Computation of Probability}
	The next theorem provides an upper bound on the probability that the solution process satisfies the specifications given by $\mathcal{A}$.
	\begin{theorem} \label{upper_bound}
		For a specification given by the accepting language of DFA $\mathcal{A}$, let $\mathcal{A}^c$ be the DFA corresponding to the complement of $\mathcal{A}$, $\mathcal R^p$ be the set defined in \eqref{eq:runs}, and $\mathcal{P}^p$ be the set of runs of length $3$ defined in \eqref{eq:reachability}.  Then the probability that the solution process of the system $\mathcal{S}$ starting from any initial state $a \in L^{-1}(p)$ under the hybrid control policy $\upsilon$ associated with the hybrid controller \eqref{eq:policy} satisfies $\mathcal{A}^c$ within time horizon $[0,T)$ is upper bounded by
		\begin{equation}\begin{aligned}\begin{split}
				\label{eq:finalprob}
				\mathbb{P}\{\sigma(\xi_{a\upsilon})\hspace{-0.3em}\models\hspace{-0.3em}\mathcal{A}^c\}\hspace{-0.3em} \le \hspace{-0.5em}\sum_{\bf q \in \mathcal{R}^p}\hspace{-0.4em}\prod\{\hspace{-0.1em}(\gamma_\nu\hspace{-0.2em}+\hspace{-0.2em}c_\nu T)\hspace{-0.2em} \mid\hspace{-0.2em} \nu\hspace{-0.2em}=\hspace{-0.2em}(q,\hspace{-0.1em}q',\hspace{-0.1em}q'') \hspace{-0.2em}\in\hspace{-0.2em} \mathcal{P}^p(\bf{q})\hspace{-0.1em}\},
			\end{split}
		\end{aligned}\end{equation}
		where $\gamma_\nu+c_\nu T$ is the upper bound on the probability that the solution process of $\mathcal{S}$ starts from $X_0 := L^{-1}(\sigma(q,q'))$ and reaches $ X_1:=L^{-1}(\sigma(q',q''))$ under control policy $\upsilon$ within time horizon $[0,T)$ which is computed via Theorem~\ref{barrier1}. %(or Theorem~\ref{barrier2}). 
	\end{theorem}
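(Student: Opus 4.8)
The plan is to establish \eqref{eq:finalprob} in two stages: a union bound that reduces the satisfaction probability to a sum over the accepting runs in $\mathcal{R}^p$, followed by a Markov-property factorization bounding the probability attached to each run by the advertised product of the reachability bounds from Theorem~\ref{barrier1}. For the first stage, note that since $a\in L^{-1}(p)$ the first symbol of the trace $\sigma(\xi_{a\upsilon})$ is $p$. As $\mathcal{A}^c$ is deterministic and self-loops correspond merely to the trajectory dwelling in a single region without emitting a new symbol, every trace accepted by $\mathcal{A}^c$ is read along some self-loop-free accepting state run whose first edge carries $p$, i.e.\ an element of $\mathcal{R}^p$. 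Writing $R^{\mathbf q}_i:=L^{-1}(\sigma(q_i,q_{i+1}))$ for $\mathbf q=(q_0,\ldots,q_k)\in\mathcal{R}^p$, the event that $\xi_{a\upsilon}$ \emph{realizes} $\mathbf q$ is the event that the solution process visits $R^{\mathbf q}_0,\ldots,R^{\mathbf q}_{k-1}$ in this order within $[0,T)$. Hence $\{\sigma(\xi_{a\upsilon})\models\mathcal{A}^c\}\subseteq\bigcup_{\mathbf q\in\mathcal R^p}\{\xi_{a\upsilon}\text{ realizes }\mathbf q\}$, and a union bound gives $\mathbb{P}\{\sigma(\xi_{a\upsilon})\models\mathcal{A}^c\}\le\sum_{\mathbf q\in\mathcal R^p}\mathbb{P}\{\xi_{a\upsilon}\text{ realizes }\mathbf q\}$.

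For the second stage, fix $\mathbf q=(q_0,\ldots,q_k)$ and bound the probability of realizing it by $\prod_{i=0}^{k-2}(\gamma_{\nu_i}+c_{\nu_i}T)$ with $\nu_i=(q_i,q_{i+1},q_{i+2})\in\mathcal P^p(\mathbf q)$. The realization event is the nested intersection of the single-hop reachability events ``reach $R^{\mathbf q}_{i+1}$ from $R^{\mathbf q}_i$''. I would factor its probability by the strong Markov property of the augmented jump-diffusion $\tilde{\mathcal{S}}$, conditioning successively on the first-entrance time into each intermediate region. Each conditional factor is precisely a reachability probability of the type controlled by Theorem~\ref{barrier1} under the controller $\mathsf{u}_{\nu_i}$ assigned to $\nu_i$ via the partition $\mu$ and the switching automaton $\mathcal{A}_{\mathfrak m}$; since that theorem bounds it by $\gamma_{\nu_i}+c_{\nu_i}T$ \emph{uniformly} over all admissible initial states in the source region, telescoping through the tower property produces the product, and substituting into the union bound yields \eqref{eq:finalprob}.

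The hard part will be this factorization. Hypothesis \eqref{eq:th1} of Theorem~\ref{barrier1} is posed on $X_0\times X_0$, so the bound controls the augmented pair $(\xi,\hat\xi)$ only when both components lie in the source region; yet when the trajectory re-enters an intermediate region the estimate $\hat\xi$ need not agree with $\xi$, and because the switching controller \eqref{eq:policy} is driven by $L(\hat\xi)$ one must argue that each hop stays within the scope of a single merged barrier function $B_{\nu_i}$ so the uniform bound still applies. A second, milder point is horizon bookkeeping: each hop consumes only a sub-interval of $[0,T)$, but since the reachability probability is monotone in the horizon and the bound $\gamma_\nu+c_\nu T$ already uses the full $T$, replacing each sub-horizon by $T$ is a sound over-approximation. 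The first-stage union bound, by contrast, is essentially combinatorial.
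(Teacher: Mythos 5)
Your two-stage skeleton (a union bound over the self-loop-free accepting runs in $\mathcal{R}^p$, followed by a Markov/tower-property factorization of each run into single-hop reachability probabilities bounded via Theorem~\ref{barrier1}) is exactly the structure of the proof the paper intends: the paper omits the argument and delegates it to the fully observed case of \cite[Theorem 5.2]{jagtap2019formal}, which is precisely this sum-of-products decomposition. At the level of strategy there is therefore nothing to object to, and the first stage is essentially combinatorial as you say (modulo the usual caveat about traces that stutter on self-loops of $\mathcal{A}^c$, which is inherited from the prior work).

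The proposal is nevertheless not a proof, because the step you yourself flag as ``the hard part'' is left open, and in the partially observed setting it is not a bookkeeping issue that telescoping will absorb. Theorem~\ref{barrier1} bounds the reachability probability uniformly only over augmented initial conditions in $X_0\times X_0$: condition \eqref{eq:th1} says nothing about pairs $(x,\hat x)$ with $x\in X_0$ but $\hat x\notin X_0$, and for such pairs $B_\nu(x,\hat x)$ may exceed $\gamma_\nu$, so the bound $B_\nu(a,\hat a)+c_\nu T$ from the proof of Theorem~\ref{barrier1} need not reduce to $\gamma_\nu+c_\nu T$. At the first hop one may plausibly assume the estimator is initialized in the same region as the state, but at each subsequent hop the conditioning event only guarantees that the \emph{true} trajectory has entered $L^{-1}(\sigma(q_i,q_{i+1}))$; the estimate at that first-entrance time can be anywhere in $X$. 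Moreover, the switching mechanism in \eqref{eq:policy} advances $q_{\mathfrak m}$ on $L(\hat\xi)$ rather than $L(\xi)$, so the controller in force during the $i$-th hop of the true trajectory need not be $\mathsf{u}_{\nu_i}$, which undercuts the claim that each conditional factor is ``precisely'' an instance of Theorem~\ref{barrier1}. To close the argument you would need either to strengthen \eqref{eq:th1} so that the per-hop bound is uniform over $X_0\times X$ (which is what the tower property actually requires here), or to prove that the estimate lies in the source region at each relevant stopping time and that the automaton states driven by $\xi$ and by $\hat\xi$ coincide along the run. As written, the second stage --- and hence \eqref{eq:finalprob} --- is asserted rather than derived; identifying the obstruction is not the same as overcoming it.
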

	\begin{proof} 
		The proof is similar to that of \cite[Theorem 5.2]{jagtap2019formal} and is omitted here due to the lack of space. 
	\end{proof}
	
	Theorem~\ref{upper_bound} enables us to decompose the specification %computation 
	into a collection of sequential reachabilities, compute bounds on the reachability probabilities using Theorem~\ref{barrier1},
	%(or Theorem \ref{barrier2}),
	and then combine the bounds in a sum-product expression.
	\begin{remark}
		In case we are unable to find control barrier functions for some of the elements $\nu \in \mathcal P^p(\mathbf q)$ in \eqref{eq:finalprob}, we replace the related term $(\gamma_\nu+c_\nu T)$ by the pessimistic bound $1$ and apply random control input. In order to get a non-trivial bound in \eqref{eq:finalprob}, at least one control barrier function must be found for each $\mathbf{q} \in\mathcal R^p$.
	\end{remark}
	\begin{corollary}
		\label{lower_bound}
		Given the result of Theorem \ref{upper_bound}, the probability that the solution process of $\mathcal{S}$ starts from any $a \in L^{-1}(p)$ under control policy $\upsilon$ and satisfies specifications given by DFA $\mathcal{A}$ over time horizon $[0,T)\subset\mathbb{R}_0^+$ is lower-bounded by
		\begin{equation}
		\mathbb{P}\{\sigma(\xi_{a\upsilon}) \models \mathcal{A}\}\geq1-\mathbb{P}\{\sigma(\xi_{a\upsilon}) \models \mathcal{A}^c\}.\nonumber
		\end{equation}
	\end{corollary}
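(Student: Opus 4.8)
The plan is to show that the two events $\{\sigma(\xi_{a\upsilon})\models\mathcal{A}\}$ and $\{\sigma(\xi_{a\upsilon})\models\mathcal{A}^c\}$ together exhaust the sample space, so that their probabilities must sum to at least one; the claimed lower bound then drops out of a one-line union-bound rearrangement. The whole statement is essentially a restatement of the fact that $\mathcal{A}^c$ recognizes the complement language, lifted to the probabilistic satisfaction relation.

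First I would recall the construction from Section~\ref{subsection:runs}: $\mathcal{A}^c$ is chosen so that its accepting language is the set-theoretic complement of that of $\mathcal{A}$, i.e. $\mathcal{L}(\mathcal{A}^c)=\Pi^{*}\setminus\mathcal{L}(\mathcal{A})$. Consequently, every finite word over $\Pi$ belongs to exactly one of $\mathcal{L}(\mathcal{A})$ and $\mathcal{L}(\mathcal{A}^c)$. Next I would fix an arbitrary realization of the solution process and let $\sigma(\xi_{a\upsilon})=(\sigma_0,\ldots,\sigma_{k-1})\in\Pi^{k}$ be the associated finite trace of Definition~\ref{Def_Finite_trace}. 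The full word $(\sigma_0,\ldots,\sigma_{k-1})$ is itself an admissible prefix in the satisfaction relation (take $j=k-1$), and since it lies in either $\mathcal{L}(\mathcal{A})$ or $\mathcal{L}(\mathcal{A}^c)$, the trace satisfies at least one of $\mathcal{A}$ or $\mathcal{A}^c$. Hence $\{\sigma(\xi_{a\upsilon})\models\mathcal{A}\}\cup\{\sigma(\xi_{a\upsilon})\models\mathcal{A}^c\}$ coincides with the whole sample space and therefore has probability one.

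Finally I would invoke inclusion–exclusion (equivalently, the union bound) to conclude
\begin{equation*}
1=\mathbb{P}\{\sigma(\xi_{a\upsilon})\models\mathcal{A}\ \vee\ \sigma(\xi_{a\upsilon})\models\mathcal{A}^c\}\leq \mathbb{P}\{\sigma(\xi_{a\upsilon})\models\mathcal{A}\}+\mathbb{P}\{\sigma(\xi_{a\upsilon})\models\mathcal{A}^c\},
\end{equation*}
which rearranges to the stated bound $\mathbb{P}\{\sigma(\xi_{a\upsilon})\models\mathcal{A}\}\geq 1-\mathbb{P}\{\sigma(\xi_{a\upsilon})\models\mathcal{A}^c\}$.

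The subtlety to be careful about, rather than a genuine obstacle, is that the two satisfaction events need \emph{not} be disjoint: because satisfaction is defined through the existence of \emph{some} accepted prefix, distinct prefixes of a single trace may fall into different languages, so a trace could satisfy both $\mathcal{A}$ and $\mathcal{A}^c$. This is exactly why the argument yields an inequality rather than an equality, and any such overlap only strengthens the bound, so it does no harm. A secondary technical point worth one remark is the measurability of these events, which is inherited from the measurability of the labeling function $L$ (and hence of the induced trace map) assumed in Section~\ref{secB}.
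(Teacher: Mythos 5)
Your proposal is correct and matches the reasoning the paper intends: the corollary is stated without an explicit proof, being treated as immediate from the fact that $\mathcal{L}(\mathcal{A}^c)=\Pi^*\setminus\mathcal{L}(\mathcal{A})$, and your argument (full trace lies in exactly one language, hence the two satisfaction events cover the sample space, hence the union bound gives the inequality) is precisely the missing justification. Your remark that the two events need not be disjoint --- since satisfaction is defined via existence of \emph{some} accepted prefix --- is a worthwhile observation that correctly explains why the statement is an inequality rather than an equality.
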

	\subsection{Computation of Control Barrier Functions} \label{section:Computation}
	{Proving the existence of a control barrier function and finding one are in general hard problems. 
		However, if functions $f$, $h$, $g_1$, $g_2$, $r_1$, and $r_2$ are polynomial with respect to their arguments and partition sets $X_i = L^{-1}(p_i), i \in \{0,1,2,\ldots,M\}$, are bounded semi-algebraic sets (i.e., they can be represented by polynomial (in)equalities), one can formulate conditions in Theorem~\ref{barrier1} as a sum-of-squares (SOS) optimization problem. See \cite[Section 5.3.1.]{jagtap2019formal} for a detailed discussion on a similar approach. Having an SOS optimization problem, one can efficiently search for a polynomial control barrier function $B_{\nu}(x,\hat{x})$ and controller $\mathsf{u}_{\nu}(\hat{x})$, for any $\nu \in \mathcal P(\mathcal{A}_{\neg\varphi})$ as in \eqref{eq:controller} using SOSTOOLS \cite{prajna2002introducing} in conjunction with a semidefinite programming solver such as SeDuMi \cite{sturm1999using} while minimizing constants $\gamma_{\nu}$ and $c_{\nu}$. Having values of $\gamma_{\nu}$ and $c_{\nu}$ for all $\nu \in \mathcal P(\mathcal{A}_{\neg\varphi})$, one can simply utilize results of Theorem~\ref{upper_bound} and Corollary~\ref{lower_bound} to compute a lower bound on the probability of satisfying the given specification.}
	\begin{remark}
		Under the assumption that sets $X,X_0$, and $X_1$ in Theorem \ref{barrier1} are compact and input set $U$ is finite, one can utilize counterexample guided inductive synthesis (CEGIS) approach to search for barrier control functions for more general nonlinear functions $f,h,g_1,g_2,r_1$, and $r_2$ in \eqref{eq:S}. For more detailed discussion on CEGIS approach, we kindly refer interested readers to the algorithm in \cite[Section 5.3.2.]{jagtap2019formal}.
	\end{remark}
	\textbf{Computational Complexity:} The number of triplets and hence the number of control barrier functions needed to be computed are bounded by $|Q|^3$, where $|Q|$ is the number of states in DFA $\mathcal{A}$. However, this is the worst-case bound and in practice, the number of control barrier functions is much smaller. In the case of sum-of-squares optimization approach, the computational complexity of finding polynomial control barrier functions depends on both the degree of polynomials and the number of state variables. One can easily see that for fixed polynomial degrees, the required computations grow polynomially with respect to the dimension of the augmented system. For the CEGIS approach, due to its iterative nature and lack of guarantee on termination, it is difficult to provide any analysis on the computational complexity.
	% In order to search for the parameters $\beta$ and $d_{u_k}$ in Lemma \ref{lmm1} satisfying \eqref{eq1:th0}-\eqref{eq1:th3}, one can use existing nonlinear optimization solvers such as \cite{gurobi}.
	%Note that, the methods may run into local optima, however, one can utilize multi-start techniques \cite{Marta2003} to obtain global optima.
	%For the final rigorous verification step to formally verify that the computed functions indeed satisfy the required conditions, one solution is or the counterexample guided inductive synthesis $($CEGIS$)$ approach which uses feasibility solvers for finding local control barrier functions of a given parametric form using Satisfiability Modulo Theories $($SMT$)$ solvers such as \text{Z3} \cite{de2008z3}, \text{MathSAT} \cite{cimatti2013mathsat5}, or \text{dReal} \cite{gao2013dreal}.	
	\section{Case Study}
	We consider a nonlinear Moore-Greitzer jet engine model in no-stall mode
	\cite{krstic1995lean} as a partially observed jump-diffusion systems by adding noise and jump terms which is given by: %\PJ{add small noise and jumps terms}
	\begin{equation*}\begin{aligned}
		\diff  \xi_1&=( -\xi_2-\frac{3}{2}\xi_1^2-\frac{1}{2}\xi_1^3)\diff t+0.2\diff W_{11t}+ 0.9\diff P_t,\\ 
		\diff \xi_2&=( \xi_1-\upsilon)\diff t +0.06\diff W_{12t},\\
		\diff y&=\xi_2\diff t +0.06\diff W_{2t},
    \end{aligned}	\end{equation*}
	where $\xi=[\xi_1,\xi_2]^T$, $\xi_1=\Phi-1$, $\xi_2=\Psi-\psi-2$, $\Phi$ is the mass flow, $\Psi$ is the pressure rise, and $\psi$ is a constant. Terms $W_{11t}, W_{12t}$, and $W_{2t}$ denote the standard Brownian motions and $ P_t$ denotes the Poisson process with rate $\lambda=5$. We consider a compact state set $X=[-1, 3]\times[-4,4]$ and regions of interest $X_0=[0,1]\times[-1,1]$, $X_1=[-1,-0.2]\times[-4,-2.5]$, $X_2=[1, 3]\times[2,4]$, and $X_3=X\setminus(X_0\cup X_1\cup X_2)$. The set of atomic propositions is given by $\Pi=\{p_0,p_1,p_2,p_3\}$ with labeling function $L(x_j)=p_j$ for all $x_j\in X_j$, $j\in\{0,1,2,3\}$. The objective here is to compute a control policy that provides a lower bound on the probability that the trajectories of the system satisfy the specification given by the accepting language of the DFA $\mathcal{A}$ given in Figure \ref{fig:DFA} over finite time-horizon $[0,T=10)$. Language of $\mathcal{A}$ entails that if we start in $X_0$ then the system will always stay away from $X_1$ or $X_2$. The corresponding DFA $\mathcal{A}^c$ accepting complement of $\mathcal{L}(\mathcal{A})$ is shown in Figure \ref{fig:DFA}. Following Subsection \ref{subsection:runs}, we only need to compute a control barrier function corresponding to triplet $(q_0,q_1,q_2)$.
	
	Now with an estimator gain in \eqref{eq:estimator1} as $K=[6.1394, 7.8927]^T$,
	we use SOSTOOLS and SeDuMi to compute a sum-of-squares polynomial control barrier function $B(x,\hat{x})$ of order $4$, sum-of-square polynomials $\psi_0(x,\hat{x})$, $\psi_1(x,\hat{x})$, $\psi(x,\hat{x})$ of order $4$, with total $1125$ coefficients resulting in a computation time of about $15$ minutes. The corresponding controller of order 2 is obtained as follows:
	\begin{equation}\label{eq:u_example}
	\mathsf{u}(\hat{x})=0.7321\hat{x}_1-1.8612\hat{x}_1\hat{x}_2-1.4356\hat{x}_2. 
	\end{equation}
	The values of $\gamma=0.099$ and $c=1\times 10^{-5}$ are obtained using bisection method resulting in $\mathbb{P}\{\sigma(\xi_{a\upsilon}) \models \mathcal{A}\}\geq 0.89$ for all $x_0 \in L^{-1}(p_0)$, as discussed in Subsection \ref{section:Computation}. One can see that only one controller is enough for enforcing the specification, thus we do not need any switching mechanism. Figure \ref{fig:realizations} shows a few trajectories starting from different initial conditions under the control policy \eqref{eq:u_example}.
	\begin{figure}
		\centering
		\includegraphics[scale=0.12]{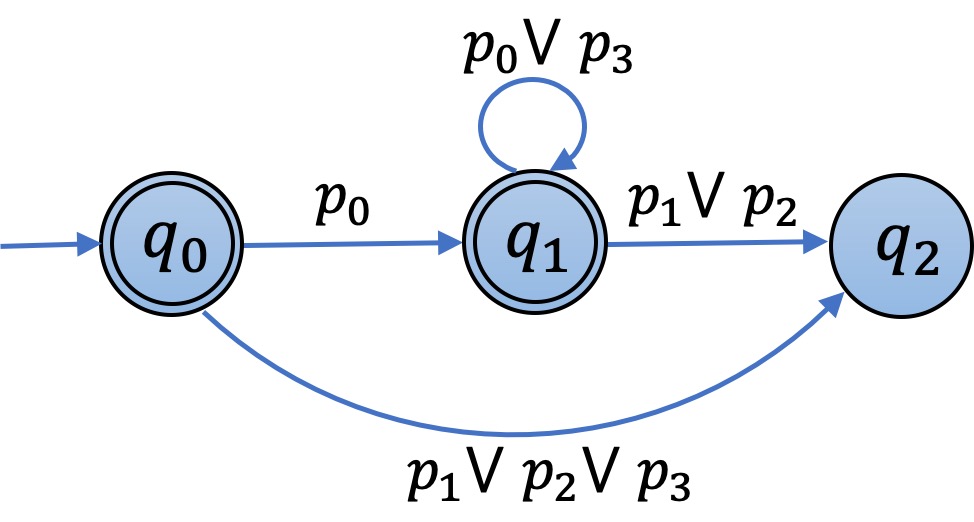}\hspace{0.2em} \includegraphics[scale=0.12]{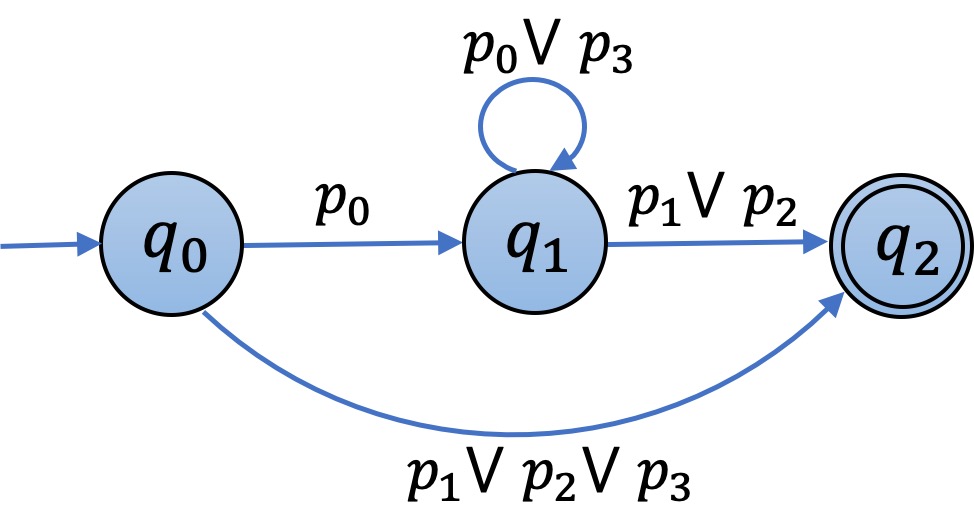}
		\caption{The DFA $\mathcal{A}$ representing specification (left) and the DFA $\mathcal{A}^c$ representing complement of $\mathcal{A}$ (right).}
		\label{fig:DFA}
	\end{figure}
	%%%%%%
	\begin{figure}
		\centering
		\includegraphics[scale=0.33]{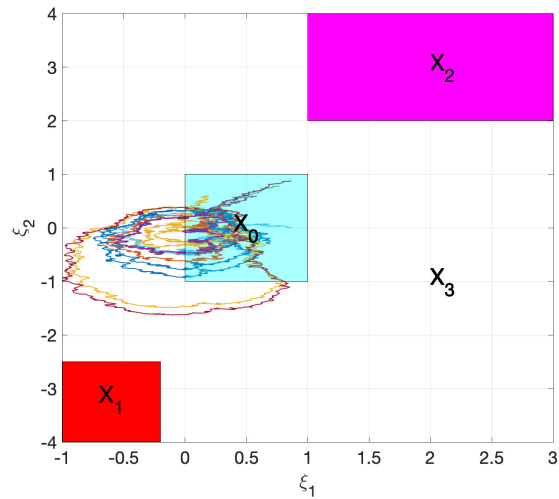} 
		\caption{A few closed loop trajectories starting from different initial conditions in $X_0$ under controller \eqref{eq:u_example}.}
		\label{fig:realizations}
		\vspace{-0.5cm}
	\end{figure}

	% \textcolor{blue}{\\\begin{remark}
	%The results from \cite{clark2019control} is only applicable to invariant property over system and not the class of property dealt with in this work which is more general. Hence their properties are not satisfied by the jet engine case study.
	%\end{remark}}
	% \MZ{We should work more on this figure! First: the color in legend does not match the color in figure! Second: I am not a fan of this property. What I can see, we simply made the jet engine stable at the equilibrium point and this was enough to ensure the satisfaction of the property! Third: it is very hard to see any noise on the trajectories: Are you simulating the results correctly? I think Pushpak should have a m-file on how to simulate jump-diffusion systems. Fourth: I think you mean different trajectories starting from different initial conditions! This is not different realizations!! In the case of different realizations, the system starts from one initial condition but due to different realization of noises, we will have different trajectories!}}
	% \includegraphics[scale=0.3]{realizations.eps} 

	\section{Conclusions}
	In this paper, we proposed a discretization-free approach for the formal controller synthesis of partially observed jump-diffusion systems. The proposed method computes a hybrid control policy together with a lower bound on the probability of satisfying complex temporal logic specifications given 
		by the accepting language of DFA $\mathcal{A}$ over a finite-time horizon. This is achieved by constructing control barrier functions over an augmented system consisting of both the system and the estimator. As a result, the probability bound is computed without requiring any prior information of estimation accuracy. 
		%The effectiveness of the results are demonstrated on a case study.

	\bibliographystyle{IEEEtran}
	
	\bibliography{bibliography.bib}

% Generated by IEEEtran.bst, version: 1.14 (2015/08/26)
\begin{thebibliography}{10}
\providecommand{\url}[1]{#1}
\csname url@samestyle\endcsname
\providecommand{\newblock}{\relax}
\providecommand{\bibinfo}[2]{#2}
\providecommand{\BIBentrySTDinterwordspacing}{\spaceskip=0pt\relax}
\providecommand{\BIBentryALTinterwordstretchfactor}{4}
\providecommand{\BIBentryALTinterwordspacing}{\spaceskip=\fontdimen2\font plus
\BIBentryALTinterwordstretchfactor\fontdimen3\font minus
  \fontdimen4\font\relax}
\providecommand{\BIBforeignlanguage}[2]{{%
\expandafter\ifx\csname l@#1\endcsname\relax
\typeout{** WARNING: IEEEtran.bst: No hyphenation pattern has been}%
\typeout{** loaded for the language `#1'. Using the pattern for}%
\typeout{** the default language instead.}%
\else
\language=\csname l@#1\endcsname
\fi
#2}}
\providecommand{\BIBdecl}{\relax}
\BIBdecl

\bibitem{belta2017formal}
C.~Belta, B.~Yordanov, and E.~A. Gol, \emph{Formal methods for discrete-time
  dynamical systems}.\hskip 1em plus 0.5em minus 0.4em\relax Springer, 2017,
  vol.~89.

\bibitem{tabuada2009verification}
P.~Tabuada, \emph{Verification and control of hybrid systems: a symbolic
  approach}.\hskip 1em plus 0.5em minus 0.4em\relax Springer Science \&
  Business Media, 2009.

\bibitem{belta2017discrete}
C.~Belta, B.~Yordanov, and E.~A. Gol, ``Discrete-time dynamical systems,'' in
  \emph{Formal Methods for Discrete-Time Dynamical Systems}.\hskip 1em plus
  0.5em minus 0.4em\relax Springer, 2017, pp. 111--118.

\bibitem{zamani2014symbolic}
M.~Zamani, P.~M. Esfahani, R.~Majumdar, A.~Abate, and J.~Lygeros, ``Symbolic
  control of stochastic systems via approximately bisimilar finite
  abstractions,'' \emph{IEEE Transactions on Automatic Control}, vol.~59,
  no.~12, pp. 3135--3150, 2014.

\bibitem{zamani2017towards}
M.~Zamani, I.~Tkachev, and A.~Abate, ``Towards scalable synthesis of stochastic
  control systems,'' \emph{Discrete Event Dynamic Systems}, vol.~27, no.~2, pp.
  341--369, 2017.

\bibitem{lavaei2020compositional1}
A.~Lavaei, S.~Soudjani, and M.~Zamani, ``Compositional (in) finite abstractions
  for large-scale interconnected stochastic systems,'' \emph{IEEE Transactions
  on Automatic Control}, 2020.

\bibitem{ames2016control}
A.~D. Ames, X.~Xu, J.~W. Grizzle, and P.~Tabuada, ``Control barrier function
  based quadratic programs for safety critical systems,'' \emph{IEEE
  Transactions on Automatic Control}, vol.~62, no.~8, pp. 3861--3876, 2016.

\bibitem{jagtap2019formal}
P.~Jagtap, S.~Soudjani, and M.~Zamani, ``Formal synthesis of stochastic systems
  via control barrier certificates,'' \emph{arXiv preprint arXiv:1905.04585},
  2019.

\bibitem{jagtap2020compositional}
P.~Jagtap, A.~Swikir, and M.~Zamani, ``Compositional construction of control
  barrier functions for interconnected control systems,'' in \emph{Proceedings
  of the 23rd International Conference on Hybrid Systems: Computation and
  Control}, 2020, pp. 1--11.

\bibitem{huang2017probabilistic}
C.~Huang, X.~Chen, W.~Lin, Z.~Yang, and X.~Li, ``Probabilistic safety
  verification of stochastic hybrid systems using barrier certificates,''
  \emph{ACM Transactions on Embedded Computing Systems (TECS)}, vol.~16,
  no.~5s, p. 186, 2017.

\bibitem{clark2019control}
A.~Clark, ``Control barrier functions for complete and incomplete information
  stochastic systems,'' in \emph{2019 American Control Conference (ACC)}.\hskip
  1em plus 0.5em minus 0.4em\relax IEEE, 2019, pp. 2928--2935.

\bibitem{CBF1NPM}
N.~Jahanshahi, P.~Jagtap, and M.~Zamani, ``Synthesis of stochastic systems with
  partial information via control barrier functions,'' \emph{21st IFAC World
  Congress}, 2020.

\bibitem{oksendal2007applied}
B.~{\O}ksendal and A.~Sulem, \emph{Applied stochastic control of jump
  diffusions}.\hskip 1em plus 0.5em minus 0.4em\relax Springer Science \&
  Business Media, 2007.

\bibitem{kai2009robust}
X.~Kai, C.~Wei, and L.~Liu, ``Robust extended kalman filtering for nonlinear
  systems with stochastic uncertainties,'' \emph{IEEE Transactions on Systems,
  Man, and Cybernetics-Part A: Systems and Humans}, vol.~40, no.~2, pp.
  399--405, 2009.

\bibitem{chen2008robust}
B.-S. Chen, W.-H. Chen, and H.-L. Wu, ``Robust $ h_2$ / $h_{\infty}$ global
  linearization filter design for nonlinear stochastic systems,'' \emph{IEEE
  transactions on circuits and systems I: Regular Papers}, vol.~56, no.~7, pp.
  1441--1454, 2008.

\bibitem{tseng2007robust}
C.-S. Tseng, ``Robust fuzzy filter design for a class of nonlinear stochastic
  systems,'' \emph{IEEE Transactions on Fuzzy Systems}, vol.~15, no.~2, pp.
  261--274, 2007.

\bibitem{baier2008principles}
C.~Baier and J.-P. Katoen, \emph{Principles of model checking}.\hskip 1em plus
  0.5em minus 0.4em\relax MIT press, 2008.

\bibitem{bonchi2013checking}
F.~Bonchi and D.~Pous, ``Checking nfa equivalence with bisimulations up to
  congruence,'' \emph{ACM SIGPLAN Notices}, vol.~48, no.~1, pp. 457--468, 2013.

\bibitem{hopcroft2001introduction}
J.~E. Hopcroft, R.~Motwani, and J.~D. Ullman, ``Introduction to automata
  theory, languages, and computation,'' \emph{Acm Sigact News}, vol.~32, no.~1,
  pp. 60--65, 2001.

\bibitem{de2015synthesis}
G.~De~Giacomo and M.~Vardi, ``Synthesis for ltl and ldl on finite traces,'' in
  \emph{Twenty-Fourth International Joint Conference on Artificial
  Intelligence}, 2015.

\bibitem{wongpiromsarn2015automata}
T.~Wongpiromsarn, U.~Topcu, and A.~Lamperski, ``Automata theory meets barrier
  certificates: Temporal logic verification of nonlinear systems,'' \emph{IEEE
  Transactions on Automatic Control}, vol.~61, no.~11, pp. 3344--3355, 2015.

\bibitem{kushner1967stochastic}
H.~Kushner, ``Stochastic stability and control, ser,'' \emph{Mathematics in
  Science and Engineering. New York: Academic Press}, 1967.

\bibitem{russell2003artificial}
S.~J. Russell and P.~Norvig, \emph{Artificial Intelligence: A Modern Approach},
  2nd~ed.\hskip 1em plus 0.5em minus 0.4em\relax Pearson Education, 2003.

\bibitem{prajna2002introducing}
S.~Prajna, A.~Papachristodoulou, and P.~A. Parrilo, ``Introducing sostools: A
  general purpose sum of squares programming solver,'' in \emph{Proceedings of
  the 41st IEEE Conference on Decision and Control, 2002.}, vol.~1.\hskip 1em
  plus 0.5em minus 0.4em\relax IEEE, 2002, pp. 741--746.

\bibitem{sturm1999using}
J.~F. Sturm, ``Using sedumi 1.02, a matlab toolbox for optimization over
  symmetric cones,'' \emph{Optimization methods and software}, vol.~11, no.
  1-4, pp. 625--653, 1999.

\bibitem{krstic1995lean}
M.~Krstic and P.~V. Kokotovic, ``Lean backstepping design for a jet engine
  compressor model,'' in \emph{Proceedings of International Conference on
  Control Applications}.\hskip 1em plus 0.5em minus 0.4em\relax IEEE, 1995, pp.
  1047--1052.

\end{thebibliography}

\end{document}